\newcommand{\ie}{\emph{i.e.}}
\newcommand{\eg}{\emph{e.g.}}
\newcommand{\cf}{\emph{cf}}
\newcommand{\sii}{L^2}
\newcommand{\Dom}{\mathfrak{D}}
\newcommand{\diag}{\mathop{\mathrm{diag}}\nolimits}
\newcommand{\sgn}{\mathop{\mathrm{sgn}}\nolimits}
\newcommand{\Real}{\mathbb{R}}
\newcommand{\Com}{\mathbb{C}}
\newcommand{\Nat}{\mathbb{N}}
\newcommand{\PT}{\mathcal{PT}}
\renewcommand{\P}{\mathcal{P}}
\newcommand{\T}{\mathcal{T}}
\newcommand{\ii}{{\rm i}}
\newcommand{\dd}{{{\rm d}}}
\newtheorem{Lemma}{Lemma}
\newtheorem{Proposition}{Proposition}
\theoremstyle{definition}
\newtheorem{Remark}{Remark}
\numberwithin{equation}{section}
\begin{document}
%
%
\title{\bf The Pauli equation with complex boundary conditions}
\author{D.~Kochan,$^{a,b}$ \ D.~Krej\v{c}i\v{r}\'{i}k,$^{c}$ \
R.~Nov\'{a}k\,$^{c,d}$ \ and \ P.~Siegl\,$^{c,e	}$}
\date{
\small
\emph{
\begin{quote}
\begin{itemize}
\item[$a)$] 
Institute for Theoretical Physics, University of Regensburg, 
Universit\"atsstrasse 31, 93053 Regensburg, Germany 
\item[$b)$]
Department of Theoretical Physics, FMFI UK, 
Mlynsk\'a dolina F2, 842 48 Bratislava, Slovakia
\item[$c)$] 
Department of Theoretical Physics, 
Nuclear Physics Institute ASCR, 25068 \v{R}e\v{z}, Czech Republic
\item[$d)$] 
Faculty of Nuclear Sciences and Physical Engineering, 
Czech Technical University in Prague, 
B\v{r}ehov\'a 7, 11519 Praha 1, Czech Republic
\item[$e)$] 
Group of Mathematical Physics of the University of Lisbon,
Complexo Interdisciplinar,
Av.~Prof.~Gama Pinto~2, P-1649-003 Lisboa, Portugal
\item[E-mails:]
denis.kochan@ur.de,
krejcirik@ujf.cas.cz, 
novakra9@fjfi.cvut.cz, 
siegl@ujf.cas.cz
\end{itemize}
\end{quote}
}
\medskip
22 March 2012
}
\maketitle
\begin{abstract}
\noindent
We consider one-dimensional Pauli Hamiltonians
in a bounded interval with possibly non-self-adjoint 
Robin-type boundary conditions.
We study the influence of the spin-magnetic interaction 
on the interplay between the type of boundary conditions
and the spectrum.
A special attention is paid to $\PT$-symmetric
boundary conditions with the physical choice
of the time-reversal operator~$\T$.
\medskip
\begin{itemize}
\item[\textbf{MSC\,2010:}]
Primary: 34L40, 34B08; 81Q12;
Secondary: 34B07, 81V10 
\item[\textbf{Keywords:}]
Pauli equation, spin-magnetic interaction, 
Robin boundary conditions,
non-self-adjoint\-ness, non-Hermitian quantum mechanics,
$\PT$-symmetry, time-re\-ver\-sal operator
\end{itemize}
\end{abstract}
%

\vfill
{\small
}

\newpage
\section{Introduction}
%
In recent years there has been a growing interest  
in non-Hermitian ``extensions'' of quantum mechanics,
usually associated with the names of \emph{$\mathcal{PT}$-symmetry},
\emph{pseudo-Hermiticity}, \emph{quasi-Hermiticity} 
or \emph{crypto-Hermiticity}
(we respectively refer to \cite{Bender_2007,Ali-review,GHS,Znojil_2008}
where the first two works are recent surveys with many references).
The quotation marks are used here because
the extended theories are physically relevant only if 
the operators in question are 
\emph{similar to self-adjoint} operators,
which in turn puts the concept back to the conventional
quantum mechanics.

However, the freedom related to the existence of the similarity
transformation can be highly useful in applications, 
since a complicated non-local self-adjoint
operator can be represented by a (possibly non-self-adjoint) 
differential operator 
(see~\cite{KSZ} for one-dimensional examples), 
and the spectral theory for the latter is much more developed.
Moreover, it is necessary that the non-Hermitian operators
possess \emph{real spectra}, which can be often ensured 
(at least in some perturbative regimes \cite{CGS,Langer-Tretter_2004})
by the simple criterion of $\mathcal{PT}$-symmetry.

The goal of the present paper is to examine 
the role of \emph{spin} in the above theories.  
We consider the simplest non-trivial situation
of an electron (spin~$\frac{1}{2}$, mass~$m$, charge~$-e<0$)
interacting exclusively 
with an external homogeneous magnetic field $\vec{B}\in\Real^3$.
Choosing the Poincar\'e gauge in which the magnetic vector potential 
coincides with $\frac{1}{2} \vec{B} \times \vec{x}$,
this system is governed by the \emph{Pauli equation}
\begin{equation}\label{Pauli}
  i \hbar \frac{\partial\Psi}{\partial t}
  = -\frac{\hbar^2}{2 m} \, \Delta\Psi
  + \frac{\mu}{\hbar} \, \vec{B}\cdot\vec{L} \, \Psi
  + \frac{e^2}{8 m} \, (\vec{B} \times \vec{x})^2 \Psi
  + \mu \, \vec{B}\cdot\vec{\sigma} \, \Psi
  =: H\Psi
\end{equation}
in the space-time variables $(\vec{x},t)$,
where~$\hbar$ is the reduced Planck constant,
$\mu:=\hbar e/(2m)$ is the Bohr magneton (for simplicity),
$\vec{L}$ is the angular-momentum operator
and~$\vec{\sigma}$ is a three-component vector 
formed by the Pauli matrices. 
The spinorial wavefunction~$\Psi$ can be represented
as an element of $\sii(\Real^3)\otimes\Com^2$ 
and the operators appearing in~\eqref{Pauli}
are assumed to appropriately act in this Hilbert space.

The Hamiltonian~$H$ 
(equipped with a suitable domain)
is Hermitian when considered in the full Hilbert space
$\sii(\Real^3)\otimes\Com^2$.
Moreover, the Pauli equation~\eqref{Pauli}
is invariant under a simultaneous
reversal of the space and time variables
(\cf~the discussion in Section~\ref{Sec.symmetry}).
Relying on general definitions for the Dirac field 
(see, \eg, \cite[\textsection26]{Landau-Lifschitz-4})
and the fact that the Pauli equation can be obtained from
the Dirac equation in a non-relativistic limit,
the discrete symmetries can be represented by means of 
the \emph{parity}~$\mathcal{P}$ 
and the \emph{time-reversal operator}~$\mathcal{T}$
(uniquely determined up to a phase factor).
 
Our way how to ``complexify''~\eqref{Pauli} 
is to restrict the space variables 
to a subset $\Omega\subset\Real^3$ 
and impose \emph{complex boundary conditions}
of the Robin type
\begin{equation}\label{bc}
  \frac{\partial\Psi}{\partial n} + A \Psi = 0 
  \qquad\mbox{on}\qquad
  \partial\Omega
  \,,
\end{equation}
where~$n$ is the outward pointing normal unit to~$\partial\Omega$
and~$A$ is a two-by-two complex-valued matrix. 
If~$\Omega$ is invariant with respect 
to the spatial inversion~$\mathcal{P}$,
it is possible to choose~$A$ in such a way that
the $\mathcal{PT}$-symmetry of~\eqref{Pauli}
remains valid for the (possibly non-Hermitian) operator~$H$
on $\sii(\Omega)\otimes\Com^2$,
subject to the boundary conditions~\eqref{bc}.

In this paper we study the interplay between the form
of the matrix~$A$ and the spectrum of~$H$.
In particular, we are interested in the existence
of real eigenvalues 
in the $\mathcal{PT}$-symmetric situation. 

We are not aware of previous works on Pauli equation 
in the non-Hermitian extensions of quantum mechanics.
However, there exist results on 
spinorial systems in the context of 
$\mathcal{PT}$-symmetric coupled-channels models
\cite{Znojil_2006a,Znojil_2006c,Znojil_2006b}
and the Dirac equation in the framework of Krein spaces
\cite{Albeverio-Guenther-Kuzhel_2009,Kuzhel-Trunk_2011}.

One of the reasons for considering the spinorial model in this paper 
is the fact that the time-reversal operator~$\mathcal{T}$ 
\emph{differs from the complex conjugation},
the latter being the time-reversal operator
for the scalar (\ie~spinless) Schr\"odinger equation,
widely studied in the $\mathcal{PT}$-symmetric quantum theory.
In fact, for fermionic systems (\ie~half-integer non-zero spin), 
one has
\begin{equation}\label{antiunitary}
  \mathcal{T}^2 = -1
  \,.
\end{equation}
This has been remarked previously in the context of 
pseudo-Hermitian operators in 
\cite{Scolarici-Solombrino_2002,Blasi-Scolarici-Solombrino_2004}.
A generalized concept of $\mathcal{PT}$-sym\-met\-ry
as regards the operator $\mathcal{P}$
is suggested in \cite{Znojil_2011}.

The present model can be regarded as an extension
of the one-dimensional scalar Hamiltonians with complex Robin 
boundary conditions studied in \cite{KBZ,K4,KSZ} to the spinorial case. 
We refer to~\cite{KS,HKS} for the discussion of relevance
of (possibly non-Hermitian) Robin boundary conditions in physics
and, in particular, 
to Section~\ref{Sec.scattering} for a simple scattering-type
interpretation in the present setting.

This paper is organized as follows.
In the following section we specify our model 
in terms of a one-dimensional Hamiltonian
coming from~\eqref{Pauli}.
A physical relevance of the boundary conditions~\eqref{bc}   
is suggested in Section~\ref{Sec.scattering}.
Section~\ref{Sec.Hamiltonian.forms} is devoted to
a rigorous definition of our Hamiltonian
as a closed operator associated with
a sectorial sesquilinear form.
In Section~\ref{Sec.symmetry} we discuss
the physical choice of the operator~$\PT$
and establish conditions on the boundary matrix~$A$
which guarantee various symmetry properties 
of the Hamiltonian.	
Section~\ref{Sec.spectrum} is devoted to 
a spectral analysis supported by numerics;
on several $\PT$-symmetric examples 
we discuss the dependence of the spectrum
on parameters characterizing the matrix~$A$.
The paper is concluded by Section~\ref{Sec.end}
in which we mention some open problems.

\section{Our model}\label{Sec.Hamiltonian}
%
We begin specifying our model represented 
by the Pauli equation~\eqref{Pauli}.

We choose the coordinate system in~$\Real^3$ in such a way
that the third coordinate axis is parallel with 
the homogeneous magnetic field~$\vec{B}$,
\ie~$\vec{B}=(0,0,B)$ where~$B\in\Real$. 
Then the orbital interaction $\vec{B}\cdot\vec{L}$ 
and the diamagnetic term $(\vec{B}\times\vec{x})^2$
represent differential operators in the first two space variables only.
On the other hand, the spinorial interaction $\vec{B}\cdot\vec{\sigma}$
acts in the third space variable only
(
through the Pauli matrix $\sigma_3=\diag(1,-1)$). 

We set 
\begin{equation}\label{domain.choice}
  \Omega:=\Real^2\times(-a,a)
  \,,
\end{equation}
with some positive number~$a$.
Assuming that the matrix~$A$ in~\eqref{bc} is constant 
on each of the connected components of~$\partial\Omega$,
the spectral problem for the Hamiltonian~$H$ therefore splits 
into two separate problems: a two-dimensional Landau-level
problem in the first two variables and 
a one-dimensional problem in the third variable
which we will study in sequel.
Up to a constant factor representing 
the energy of the given Landau level,
the corresponding one-dimensional operators have the form   
\begin{equation}\label{fyzham}
  H_b = 
\begin{pmatrix}
  - \frac{\displaystyle\dd ^2}{\displaystyle\dd x^2} + b & 0\\
  0 & - \frac{\displaystyle\dd ^2}{\displaystyle\dd x^2} - b
\end{pmatrix}
  \qquad\mbox{on}\qquad
  \mathcal{H} := \sii\big((-a,a);\Com^2\big)
  \,,
\end{equation}
subject to the boundary conditions 
\begin{equation}\label{bc.pm}
  \Psi'(\pm a) + A^\pm \Psi(\pm a) = 0 
  \,.
\end{equation}
Here we have put $\hbar^2/(2m)=1$ and $b:=\mu B$,
$A^\pm \in \Com^{2 \times 2}$, and the third space
variable is (with an abuse of notation) denoted by~$x$.

In view of the choice of physical constants made above,
the only distinguished length in our problem is the half-width~$a$,
and therefore the results must be scaled appropriately
with respect to this length.
In particular, the parameter~$b$ (characterizing the strength
of the magnetic field) and eigenvalues of~$H_b$ 
(corresponding to quantum energies)
become dimensionless when multiplied by~$a^{2}$.
The same can be done for the entries of~$A^{\pm}$ 
when multiplied by~$a$.
Consequently, all parameters can be thought as dimensionless
in the sequel.

As usual, the Hilbert space~$\mathcal{H}$ is identified
with $\sii((-a,a)) \otimes \Com^2$ and its elements 
are represented by the two-component spinors
$$
  \Psi = 
  \begin{pmatrix} 

  \psi_+ \\ \psi_-
  \end{pmatrix}
  \,,
$$
where $\psi_\pm \in \sii((-a,a))$
(the $\pm$ notation should not be confused with 
the superscripts of the matrices~$A^\pm$
referring to the endpoints of $(-a,a)$).
The inner product in~$\mathcal{H}$ is defined by
$$
  (\Phi, \Psi) 
  := \int_{-a}^a \overline{\Phi}^T\!(x) \, \Psi(x) \, {\rm d}x
  \,,
$$ 
where the upper index $T$ denotes transposition. 
The corresponding norm is denoted by~$\|\cdot\|$. 
The Euclidean norm of the spinor~$\Psi$ 
as a vector in~$\Com^2$ is denoted by
$
  |\Psi| := \sqrt{|\psi_+|^2+|\psi_-|^2}
$
and we use the same notation for the corresponding
operator (matrix) norm 
$
  |A| := \max\big\{
  |A\Psi| \, \big| \, \Psi\in \Com^2, \, |\Psi|=1 
  \big\} 
$
for $A \in \Com^{2 \times 2}$.

\section{A scattering motivation}\label{Sec.scattering}
%
Before giving a rigorous definition of our Hamiltonian
formally introduced \eqref{fyzham}--\eqref{bc.pm},
let us first justify the physical relevance of 
the boundary conditions~\eqref{bc.pm}. 
Our method is based on a generalization of an idea 
originally suggested in~\cite{HKS}. 

Consider a generalized eigenvalue problem for the Hamiltonian
of the form \eqref{fyzham} on the whole space~$\Real$
locally perturbed by an electric field:
\begin{equation}\label{fyzham.scattering}
\begin{pmatrix}
  - \frac{\displaystyle\dd ^2}{\displaystyle\dd x^2} + b + V(x) & 0\\
  0 & - \frac{\displaystyle\dd ^2}{\displaystyle\dd x^2} - b + V(x)
\end{pmatrix}
\begin{pmatrix}
  \psi_+ \\
  \psi_-
\end{pmatrix}
  =
  \lambda
\begin{pmatrix}
  \psi_+ \\
  \psi_-
\end{pmatrix}
  \,.
\end{equation}
Here~$x\in\Real$, $\lambda\in\Real$ and 
$V$~is the electric potential that is assumed to be
compactly supported in $(-a,a)$. 
Solutions~$\Psi$ with $\lambda<-|b|$ 
are bound states (associated with discrete eigenvalues),
while those with $\lambda \geq -|b|$ correspond to scattering states
(associated with the essential spectrum).	

Outside the support of~$V$ the problem~\eqref{fyzham.scattering} 
admits explicit solutions in terms of exponential functions.
Let us look for special scattering solutions satisfying
\begin{equation}\label{as.sol}
  \Psi(x) = 
\begin{pmatrix}
  e^{\ii \sqrt{\lambda - b} \, x}
  \\
  e^{\ii \sqrt{\lambda + b} \, x}
\end{pmatrix}
  \qquad \mbox{for} \qquad
  |x| \geq a
  \,.
\end{equation}
Then the (physical) problem~\eqref{fyzham.scattering} 
on the whole real axis can be solved by considering
an (effective) boundary value problem in $(-a,a)$.
The latter is simply obtained by considering~\eqref{fyzham.scattering}  
in $(-a,a)$ and requiring that the solutions  
match at~$\pm a$ smoothly with the asymptotic solutions~\eqref{as.sol}.    
This leads to the boundary conditions~\eqref{bc.pm}
with an \emph{energy-dependent} matrix
\begin{equation}\label{bc.energy}
  A_\lambda^\pm =
\begin{pmatrix}
  -\ii \sqrt{\lambda - b} & 0
  \\
  0 & -\ii \sqrt{\lambda + b}
\end{pmatrix}
  \,.
\end{equation}

Note that~\eqref{fyzham.scattering} for $x \in (-a,a)$, 
subject to~\eqref{bc.pm} with~\eqref{bc.energy} at~$\pm a$,
does not represent a standard spectral problem,
it is rather an operator-pencil problem
(because of the dependence of~$A_\lambda^\pm$
on the spectral parameter~$\lambda$).  
It is non-linear in its nature.
However, it can be solved by first considering 
a genuine (linear) spectral problem,
namely~\eqref{fyzham.scattering} for $x \in (-a,a)$, 
subject to~\eqref{bc.pm} with~$A_\alpha^\pm$ at~$\pm a$,
with~$\alpha$ being treated 	as a real parameter. 
This leads to a discrete set of eigencurves 
$\alpha \mapsto \lambda_n(\alpha)$, $n \in \Nat$.
Then the ``eigenvalues'' of the true, energy-dependent problem   
are determined as those points~$\lambda_n(\alpha_*)$ 
satisfying the (non-linear) algebraic equations
\begin{equation}\label{parabola}
  \lambda_n(\alpha_*) = \alpha_*  
  \,.
\end{equation}

The elements of the set $\{\lambda_n(\alpha_*)\}_{n\in\Nat}$ 
are called \emph{perfect-transmission energies} (PTEs) in~\cite{HKS},
since their physical meaning is that they determine energies 
for which there is no reflection for the initial 
scattering problem~\eqref{fyzham.scattering} in~$\Real$. 
It is interesting that PTEs are real, although they are 
obtained via solving a highly non-self-adjoint spectral problem. 
This feature is related to the fact that the choice~$A_\alpha^\pm$
ensures that the boundary conditions are $\mathcal{PT}$-symmetric
(although not $\mathcal{PT}$-symmetric in the context
of the present paper where we do not allow the presence
of~$\lambda$ and~$b$ in the boundary conditions, see below).
A physical interpretation of the possible complexification
of the spectra of the auxiliar $\mathcal{PT}$-symmetric 
spectral problem is also proposed in~\cite{HKS}.

It is also interesting to note that switching on
the static magnetic field (\ie~making $b \not= 0$)
will typically lead to a splitting of the doubly degenerate eigencurves 
corresponding to the auxiliar non-self-adjoint spectral problem for $b=0$
(\cf~Figure~\ref{exampleCb}).
Consequently, to each of the PTE in the scalar case 
without the magnetic field there correspond two PTEs
in our spinorial model.
The analogy with the \emph{Zeeman effect} 
should not be surprising.

The matrix~\eqref{bc.energy} is complex and non-Hermitian,
which is typical for effective models of scattering solutions
of~\eqref{fyzham.scattering}. 
On the other hand, real-valued Hermitian matrices 
are obtained when looking for bound states.
In this paper, we proceed in a full generality
by allowing \emph{arbitrary} matrices~$A^\pm$ in~\eqref{bc.pm}.  
However, it is important to stress that 
we regard the matrices
as parameters entering the spectral system;
the dependence of~$A^\pm$ on the spectral parameter~$\lambda$
is not allowed and the dependence on the field~$b$
is allowed only if~$b$ is treated as a parameter 
(no change under the action of~$\T$, 
\cf~Section~\ref{Sec.symmetry}).  

To end up this motivation section, 
let us note that alternative proposals for 
the connection between non-Hermitian 
$\mathcal{PT}$-sym\-met\-ric operators
and physics have been suggested recently in the context of scattering in
\cite{Jin-Song_2011,Longhi_2011,Yoo-Sim-Schomerus_2011,Jones_2011}.

\section{The Pauli Hamiltonian}\label{Sec.Hamiltonian.forms}
%
We now turn to a rigorous definition of the Hamiltonian
formally introduced by \eqref{fyzham}--\eqref{bc.pm}.
In other words, since we are interested in spectral properties,
we need a closed realization of the operator~$H_b$.

The easiest way is to define the Hamiltonian as 
the Friedrichs extension of the operator~\eqref{fyzham}
initially considered on uniformly smooth spinors 
satisfying~\eqref{bc.pm}.
On such a restricted domain, 
an integration by parts easily leads to 
the associated sesquilinear form~$h_b$ 
as a sum of three terms
\begin{equation}\label{form}
  h_b(\Phi,\Psi) 
  = q_1(\Phi,\Psi) + b \, q_2(\Phi,\Psi) + q_3(\Phi,\Psi) 
  \,,
  \\
\end{equation}
where
\begin{equation}\label{form3}
\begin{aligned}
  q_1(\Phi,\Psi) &:= (\Phi',\Psi') 
  \,,
  \\
  q_2(\Phi,\Psi) &:= (\Phi,\sigma_3\Psi) 
  \,, 
  \\
  q_3(\Phi,\Psi) &:=
  \overline{\Phi}^T\!(a)\,A^+\,\Psi(a) 
  - \overline{\Phi}^T\!(-a)\,A^-\,\Psi(-a) 
  \,. 
\end{aligned}
\end{equation}
The form~$h_b$ is well defined on a larger,
Sobolev-type space
\begin{equation}\label{form.domain}
  \Dom(h_b) := H^1\big((-a,a);\Com^2\big)
  \,.
\end{equation}
It is obvious for~$q_1$ and~$q_2$, 
while the boundary term~$q_3$ can be shown bounded
on~$\Dom(h_b)$ by means of the Sobolev embedding
$
  H^1((-a,a)) \hookrightarrow C^0([-a,a])
$.

Our aim is to show that~$h_b$ is a closed sectorial form.
It is clear for~$q_1$ defined on~\eqref{form.domain},
since~$q_1$ is associated with the Neumann Laplacian
(\cf~\cite[Sec.~7]{Davies}),
and as such it is a densely defined, closed, 
symmetric, non-negative form. 
The term~$q_2$ represents just a bounded perturbation;
indeed, $|q_2[\Psi]| \leq \|\Psi\|^2$
for every $\Psi \in \mathcal{H}$.
It is not longer true for~$q_3$, however,
a suitable quantification of the Sobolev embedding 
can be used to ensure that~$q_3$ still represents 
a small perturbation in the following sense.
\begin{Lemma}\label{Lem.relative}
For every $\Psi\in\Dom(h_b)$ and $\varepsilon\in(0,1)$,
\begin{equation*}
\begin{aligned}
  \big|q_3[\Psi]\big| 
  &\leq \varepsilon \left(|A^+| + |A^-|\right) \|\Psi'\|^2 
  + \left(
  \frac{|A^+| + |A^-|}{2a} + \frac{|A^+| + |A^-|}{\varepsilon}
  \right)
  \|\Psi\|^2
  \,.
\end{aligned}
\end{equation*}
%
Consequently, 
the form $q_2+q_3$ is relatively bounded with respect to~$q_1$  
and the relative bound can be made arbitrarily small.
\end{Lemma}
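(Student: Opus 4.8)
The plan is to reduce the whole estimate to a single trace-type inequality controlling the boundary values $|\Psi(\pm a)|^2$ by $q_1[\Psi]=\|\Psi'\|^2$ and the ambient norm $\|\Psi\|^2$; this is exactly the quantitative form of the Sobolev embedding $H^1((-a,a))\hookrightarrow C^0([-a,a])$ invoked before the statement. Granting such a bound, the estimate for $q_3$ reduces to elementary linear algebra: by the Cauchy--Schwarz inequality in $\Com^2$ together with the operator-norm definition of $|A^\pm|$ one has $\big|\overline{\Psi}^T\!(\pm a)\,A^\pm\,\Psi(\pm a)\big|\leq |A^\pm|\,|\Psi(\pm a)|^2$, and therefore $\big|q_3[\Psi]\big|\leq |A^+|\,|\Psi(a)|^2+|A^-|\,|\Psi(-a)|^2$.

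To derive the trace estimate I would work with the scalar function $g(x):=|\Psi(x)|^2$, which for $\Psi\in H^1$ is absolutely continuous on $[-a,a]$ with $g'(x)=2\,\mathrm{Re}\,\big(\overline{\Psi}^T\!(x)\,\Psi'(x)\big)$, so that $|g'(x)|\leq 2\,|\Psi(x)|\,|\Psi'(x)|$. The fundamental theorem of calculus then gives, for every $x\in(-a,a)$, the identity $|\Psi(a)|^2=|\Psi(x)|^2+\int_x^a g'(t)\,\dd t$, whence $|\Psi(a)|^2\leq |\Psi(x)|^2+2\int_{-a}^a|\Psi|\,|\Psi'|\,\dd t$. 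The one genuinely delicate point is that I would not evaluate at a fixed $x$, but \emph{average} this inequality over $x\in(-a,a)$: the mean of $|\Psi(x)|^2$ equals $\|\Psi\|^2/(2a)$, which is precisely what produces the $1/(2a)$ coefficient in the statement (a pointwise evaluation would not yield an $L^2$-controlled term). A Young inequality $2\,|\Psi|\,|\Psi'|\leq \varepsilon\,|\Psi'|^2+\varepsilon^{-1}|\Psi|^2$ applied to the remaining integral then gives $|\Psi(a)|^2\leq \varepsilon\,\|\Psi'\|^2+\big(\tfrac{1}{2a}+\tfrac{1}{\varepsilon}\big)\|\Psi\|^2$, and symmetrically for $|\Psi(-a)|^2$.

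Inserting these two bounds into the linear-algebra estimate for $q_3$ yields the claimed inequality verbatim, with coefficient $\varepsilon(|A^+|+|A^-|)$ in front of $\|\Psi'\|^2$ and $\tfrac{|A^+|+|A^-|}{2a}+\tfrac{|A^+|+|A^-|}{\varepsilon}$ in front of $\|\Psi\|^2$. For the concluding assertion, note that $q_1[\Psi]=\|\Psi'\|^2$ and that $q_2$ is bounded, $|q_2[\Psi]|\leq\|\Psi\|^2$; hence the displayed estimate gives $\big|q_2[\Psi]+q_3[\Psi]\big|\leq \varepsilon(|A^+|+|A^-|)\,q_1[\Psi]+C_\varepsilon\,\|\Psi\|^2$ for a constant $C_\varepsilon$ absorbing the $\|\Psi\|^2$-terms. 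Since the coefficient of $q_1[\Psi]$ tends to $0$ as $\varepsilon\to 0^+$, the relative bound of $q_2+q_3$ with respect to $q_1$ can be made arbitrarily small, as required. I expect no serious obstacle here: apart from the averaging trick fixing the precise $1/(2a)$ constant, every step is mechanical.
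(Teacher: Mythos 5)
Your proposal is correct and follows essentially the same route as the paper: the paper's proof rests on exactly the trace estimate $|\Psi(\pm a)|^2 \leq 2\|\Psi'\|\|\Psi\| + \tfrac{1}{2a}\|\Psi\|^2 \leq \varepsilon\|\Psi'\|^2 + (\tfrac{1}{2a}+\tfrac{1}{\varepsilon})\|\Psi\|^2$, obtained by the fundamental theorem of calculus and the Schwarz/Young inequalities, which is what your averaging argument establishes in detail. Your write-up merely fills in the steps the paper leaves to the reader (including the averaging that produces the $1/(2a)$ constant), so there is nothing to correct.
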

\begin{proof}
The claim is based on the estimates
\begin{equation}\label{fundamental}
  |\Psi(\pm a)|^2 
  \leq 2 \;\!\|\Psi'\| \|\Psi\| + \frac{1}{2a}\,\|\Psi\|^2 
  \leq \varepsilon \, \|\Psi'\|^2 
  + \left(
  \frac{1}{2a} + \frac{1}{\varepsilon}
  \right)
  \|\Psi\|^2 
\end{equation}
valid for any $\Psi\in\Dom(h_b)$.
Here~the first inequality can be established quite easily
by the fundamental theorem of calculus 
and the Schwarz inequality. 
\end{proof}

Consequently, the perturbation result \cite[Thm.VI.1.33]{Kato} 
can be used to show that~$h_b$ is indeed sectorial and closed. 
According to the first representation theorem \cite[Thm.VI.2.1]{Kato}, 
there exists a unique m-sectorial operator~$H_b$ in~$\mathcal{H}$ 
such that $h_b(\Phi,\Psi) = (\Phi,H_b \Psi)$ 
for all $\Phi \in \Dom(h_b)$ 
and $\Psi \in \Dom(H_b) \subset \Dom(h_b)$. 
Following the arguments \cite[Ex.~VI.2.16]{Kato}, 
it is easy to check that~$H_b$ indeed acts 
as \eqref{fyzham}--\eqref{bc.pm}; more precisely,
\begin{equation}\label{hamiltonian}
\begin{aligned}
  H_b \Psi &=
\begin{pmatrix}
  -\psi_+'' + b \psi_+ \\
  -\psi_-'' - b \psi_-
\end{pmatrix}, \\
  \Dom(H_b) &= \Big\{ \Psi \in H^2\big((-a,a);\mathbb{C}^2\big) 
  \ \big| \
  \Psi'(\pm a) + A^\pm \Psi(\pm a) = 0 
  \Big\}
  \,.
\end{aligned}
\end{equation}
\begin{Proposition}\label{msectorial}
$H_b$~defined by~\eqref{hamiltonian} is an m-sectorial  
operator on~$\mathcal{H}$.
The adjoint of~$H_b$ is given by
\begin{equation}\label{hamiltonian.adjoint}
\begin{aligned}
  H_b^* \Psi &=
\begin{pmatrix}
  -\psi_+'' + b \psi_+ \\
  -\psi_-'' - b \psi_-
\end{pmatrix}, \\
  \Dom(H_b^*) &= \Big\{ \Psi \in H^2\big((-a,a);\mathbb{C}^2\big) 
  \ \big| \
  \Psi'(\pm a) + (A^\pm)^* \;\! \Psi(\pm a) = 0 
  \Big\}
  \,,
\end{aligned}
\end{equation}
where $A^*=\overline{A^T}$.
\end{Proposition}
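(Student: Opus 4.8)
The plan is to derive both statements from the form-theoretic setup already in place. The m-sectoriality of $H_b$ is essentially immediate: by Lemma~\ref{Lem.relative} and the perturbation theorem \cite[Thm.VI.1.33]{Kato}, the form $h_b$ on $\Dom(h_b)$ is densely defined, sectorial and closed, so the first representation theorem \cite[Thm.VI.2.1]{Kato} attaches to it a unique m-sectorial operator; the identification \eqref{hamiltonian}, obtained via the reasoning of \cite[Ex.~VI.2.16]{Kato}, shows this operator is precisely $H_b$. Thus the first claim only collects facts already assembled above.

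For the adjoint I would invoke the fact that the m-sectorial operator associated with the closed sectorial form $h_b$ has as its adjoint the operator associated with the adjoint form $h_b^*(\Phi,\Psi):=\overline{h_b(\Psi,\Phi)}$, see \cite[Thm.~VI.2.5]{Kato}. The task then reduces to identifying $h_b^*$ algebraically. Writing $h_b=q_1+b\,q_2+q_3$, I note that $q_1$ is symmetric (it is the Neumann form) and that $q_2(\Phi,\Psi)=(\Phi,\sigma_3\Psi)$ is symmetric because $\sigma_3$ is Hermitian; as $b\in\Real$, the part $q_1+b\,q_2$ is unchanged under passage to the adjoint form, and only the boundary term $q_3$ is affected. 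For a fixed endpoint and matrix $A\in\Com^{2\times2}$, the scalar $\overline{\Psi}^T A\,\Phi$ equals its own transpose, whence its complex conjugate is $\Phi^T\overline{A}\,\overline{\Psi}=\overline{\Phi}^T A^*\Psi$ with $A^*=\overline{A^T}$. Applying this at $\pm a$ gives
\begin{equation*}
  q_3^*(\Phi,\Psi)
  = \overline{\Phi}^T\!(a)\,(A^+)^*\,\Psi(a)
  - \overline{\Phi}^T\!(-a)\,(A^-)^*\,\Psi(-a)
  \,,
\end{equation*}
so that $h_b^*$ is exactly $h_b$ with $A^\pm$ replaced by $(A^\pm)^*$. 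Because $h_b^*$ has the same structure as $h_b$, the operator identification leading to \eqref{hamiltonian} applies verbatim with this replacement and yields \eqref{hamiltonian.adjoint}.

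The genuine analytic content having been discharged in Lemma~\ref{Lem.relative} and in Kato's theorems, the only step demanding care is the matrix algebra of the boundary form under conjugation: one must track that the spinorial boundary values lie in $\Com^2$ and that $A^\pm$ act as matrices, so that the conjugate-transpose manipulation produces precisely $(A^\pm)^*=\overline{(A^\pm)^T}$ and not $\overline{A^\pm}$ or $(A^\pm)^T$. I expect this to be the main (if modest) obstacle. An entirely equivalent route would be to characterize $\Dom(H_b^*)$ directly: find those $\Phi$ for which $\Psi\mapsto(\Phi,H_b\Psi)$ is bounded, integrate by parts twice to recover the differential action in the bulk, and read off the boundary conditions from the requirement that the remaining boundary bilinear form vanish on $\Dom(H_b)$; the same matrix identity reappears there.
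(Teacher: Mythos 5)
Your proposal is correct and takes essentially the same route as the paper: m-sectoriality is collected from Lemma~\ref{Lem.relative} together with Kato's perturbation theorem (Thm.~VI.1.33) and the first representation theorem, and the adjoint is identified via Kato's Thm.~VI.2.5 as the m-sectorial operator associated with the adjoint form $h_b^*(\Phi,\Psi)=\overline{h_b(\Psi,\Phi)}$; the paper merely leaves implicit the boundary-matrix algebra that you spell out. One typo to fix: the complex conjugate of the scalar $\overline{\Psi}^T\!A\,\Phi$ is $\Psi^T\overline{A}\,\overline{\Phi}$, not $\Phi^T\overline{A}\,\overline{\Psi}$, though your final identity $\overline{\Phi}^T\!A^*\Psi$ with $A^*=\overline{A^T}$ is the correct one.
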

\begin{proof}
It remains to notice (\cf~\cite[Thm.~VI.2.5]{Kato})
that the adjoint operator
is determined as the m-sectorial operator 
associated with the adjoint form~$h_b^*$
defined by $h_b^*(\Phi,\Psi):=\overline{h_b(\Psi,\Phi)}$,
$\Dom(h_b^*):=\Dom(h_b)$.
\end{proof}

Note that the choice $A^\pm=0$ gives rise 
to the (self-adjoint) Pauli Hamiltonian,
subject to Neumann boundary conditions,
that we denote by~$H_b^N$.  
(At the same time, the choice $A^\pm=\infty$
formally corresponds to Dirichlet 
boundary conditions.)

\section{Symmetry properties}\label{Sec.symmetry}
%
It is well known that the Pauli equation~\eqref{Pauli}
(in the whole space $\Real^3$)
is invariant under the simultaneous space inversion and time reversal
(\ie~$\vec{x} \mapsto -\vec{x}$ and $t \mapsto -t$, respectively).
This can be easily established if one realizes
that the time reversal leads to a change of orientation
of the magnetic field (\ie~$\vec{B} \mapsto -\vec{B}$),
while the orientation is unchanged by the space inversion.
These properties can be deduced from Maxwell's equations
to which the equation~\eqref{Pauli} is implicitly coupled
(\cf~\cite[\textsection17]{Landau-Lifschitz-2}).

One is tempted to mathematically formalize
the space-time reversal invariance 
in terms of a symmetry property of the Hamiltonian~$H$.
Given a unitary or antiunitary operator~$\mathcal{C}$,
we say that a linear operator~$H$ in a Hilbert space
is \emph{$\mathcal{C}$-symmetric} if 
\begin{equation}\label{invariance}
  [H,\mathcal{C}] = 0
  \,.
\end{equation}
Here the commutator relation should be interpreted 
as an operator identity on the domain of~$H$,
\ie~$\mathcal{C} H \subset H \mathcal{C}$.
In this framework, however, the Hamiltonian~$H$
appearing~\eqref{Pauli} is not $\mathcal{PT}$-symmetric,
just because there is no way how ensure
the change of sign~$\vec{B}$ under the action of~$\T$
in the Hilbert-space setting
(in which $\vec{B}$ is considered as an operator of multiplication).
Nevertheless, $H$~of course satisfies~\eqref{invariance}
with $\mathcal{C}=\mathcal{PT}$ provided that
the magnetic field is absent.

One of the goals of this section is to determine
the class of boundary matrices~$A^\pm$ which
preserves the $\mathcal{PT}$-symmetry in the sense above.
In other words, since we do not like to think of~$b$ as a component
of a field governed by the additional equations
and to mathematically formalize the action of~$\mathcal{T}$ on the field
($b$~is rather a fixed parameter in our Hilbert-space setting),
we restrict ourselves to rigorously looking for the property
\begin{equation}\label{invariance.1D}
  [H_0,\mathcal{PT}]=0
  \,;
\end{equation}
boundary conditions~\eqref{bc.pm} satisfying this relation
will be called $\mathcal{PT}$-symmetric.
In other words, boundary conditions are $\mathcal{PT}$-symmetric
if, and only if, $\PT\Psi$ satisfies the same equations
as~$\Psi$ in~\eqref{bc.pm}.
Similarly, we shall define $\mathcal{PK}$-symmetric
boundary conditions.

In our one-dimensional situation~\eqref{fyzham},
the parity~$\mathcal{P}$ 
and the time reversal operator~$\mathcal{T}$  
act on spinors as follows
(\cf~\cite[\textsection30]{Landau-Lifschitz-3} and
\cite[\textsection60]{Landau-Lifschitz-3}, respectively)
\begin{equation}\label{PT-operators}
  (\mathcal{P}\Psi)(x)
  := \Psi(-x)
  \,, \qquad
  (\mathcal{T}\Psi)(x)
  := i\sigma_2\overline{\Psi(x)} =
  \begin{pmatrix}
  \overline{\psi_-(x)} \\ -\overline{\psi_+(x)}
  \end{pmatrix}
  \,.
\end{equation}
It is important to stress that~$\mathcal{T}$
differs from the complex conjugation operator
\begin{equation}\label{K-operator}
  (\mathcal{K}\Psi)(x):=\overline{\Psi(x)}
  \,,
\end{equation}
the latter being the time reversal operator in the scalar case.

It is easily seen that $\mathcal{P}$, $\mathcal{T}$ and $\mathcal{K}$
are norm-preserving, mutually commuting bijections on~$\mathcal{H}$.
$\mathcal{P}$~is linear, while $\mathcal{T}$ and $\mathcal{K}$
are antilinear (\ie~conjugate-linear) operators.
$\mathcal{P}$ and $\mathcal{K}$ are involutive
(\ie~$\mathcal{P}^2=1=\mathcal{K}^2$),
while~$\mathcal{T}$ satisfies~\eqref{antiunitary}.

\begin{Proposition}\label{Prop.symmetry}
$H_0$ is
\begin{itemize}
\item
$\PT$-symmetric if, and only if,
$
  A^-=\mathcal{T}A^+\mathcal{T}
$, \ie,
$$
  A^- =
  \begin{pmatrix}
    -\overline{a_{22}} & \overline{a_{21}}\\
    \overline{a_{12}} & -\overline{a_{11}}
  \end{pmatrix}
  \qquad \mbox{for} \qquad
  A^+ =
  \begin{pmatrix}
    a_{11} & a_{12} \\
    a_{21} & a_{22}
  \end{pmatrix}
  \,;
$$
\item
$\mathcal{PK}$-symmetric if, and only if,
$
  A^-=-\mathcal{K}A^+\mathcal{K}\equiv-\overline{A^+}
$
, \ie,
$$
  A^- =
  \begin{pmatrix}
    -\overline{a_{11}} & -\overline{a_{12}}\\
    -\overline{a_{21}} & -\overline{a_{22}}
  \end{pmatrix}
  \qquad \mbox{for} \qquad
  A^+ =
  \begin{pmatrix}
    a_{11} & a_{12} \\
    a_{21} & a_{22}
  \end{pmatrix}
  \,.
$$
\end{itemize}
\end{Proposition}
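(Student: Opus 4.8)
The plan is to reduce both equivalences to a condition on the boundary matrices, since the differential action carries no obstruction. For $b=0$ the operator $H_0$ acts as $-\dd^2/\dd x^2$ on each component, that is, as $-\dd^2/\dd x^2\otimes I$ on $\mathcal{H}$. First I would observe that this formal expression commutes with each of $\mathcal{P}$, $\mathcal{T}$ and $\mathcal{K}$: parity commutes with the even-order derivative, complex conjugation commutes with the real-coefficient operator, and the constant factor $\ii\sigma_2$ (respectively the identity) passes through $-\dd^2/\dd x^2\otimes I$. Consequently, for $\mathcal{C}\in\{\mathcal{PT},\mathcal{PK}\}$ the inclusion $\mathcal{C}H_0\subset H_0\mathcal{C}$ holds on the level of the differential expression, so $[H_0,\mathcal{C}]=0$ is equivalent to the statement that $\mathcal{C}$ maps $\Dom(H_0)$ into itself. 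As $\mathcal{C}$ is a norm-preserving bijection preserving $H^2((-a,a);\Com^2)$, this reduces to requiring that $\Phi:=\mathcal{C}\Psi$ satisfy the boundary conditions~\eqref{bc.pm} whenever $\Psi$ does, which is exactly the notion of $\mathcal{PT}$- (resp.\ $\mathcal{PK}$-) symmetric boundary conditions announced before the statement.

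Next I would transport the boundary data. For the $\mathcal{PT}$ case, writing $\Phi(x)=\ii\sigma_2\overline{\Psi(-x)}$ and differentiating gives $\Phi'(x)=-\ii\sigma_2\overline{\Psi'(-x)}$, so that $\Phi(a)=\ii\sigma_2\overline{\Psi(-a)}$ and $\Phi'(a)=-\ii\sigma_2\overline{\Psi'(-a)}$, with the analogous expressions at $-a$ obtained by interchanging the endpoints. The crucial feature is that the parity flip $x\mapsto-x$ couples the condition for $\Phi$ at $+a$ to the condition for $\Psi$ at $-a$. Inserting $\Psi'(-a)=-A^-\Psi(-a)$ into $\Phi'(a)+A^+\Phi(a)$ I obtain, after factoring, the vector $(\ii\sigma_2\overline{A^-}+A^+\ii\sigma_2)\,\overline{\Psi(-a)}$.

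The remaining step is to pass from a functional identity to a matrix identity, which I regard as the only genuinely delicate point. Since $\Psi(-a)$ ranges over all of $\Com^2$ as $\Psi$ runs through $\Dom(H_0)$, the above vector must vanish for every $\Psi(-a)\in\Com^2$, forcing $\ii\sigma_2\overline{A^-}+A^+\ii\sigma_2=0$. Using that $\ii\sigma_2$ is real with $(\ii\sigma_2)^2=-I$, hence $(\ii\sigma_2)^{-1}=-\ii\sigma_2$, this rearranges to $A^-=\ii\sigma_2\,\overline{A^+}\,\ii\sigma_2=\mathcal{T}A^+\mathcal{T}$, and a direct two-by-two multiplication yields the entrywise form in the statement. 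I would then verify that imposing the twin condition at $-a$ (using $\Psi'(a)=-A^+\Psi(a)$) reproduces the identical matrix equation, so that a single endpoint suffices and no hidden constraint arises. The $\mathcal{PK}$ case is the same computation with $\Phi(x)=\overline{\Psi(-x)}$ and $\Phi'(x)=-\overline{\Psi'(-x)}$, which removes the $\ii\sigma_2$ factors and leads to $\overline{A^-}+A^+=0$, i.e.\ $A^-=-\overline{A^+}=-\mathcal{K}A^+\mathcal{K}$, with the stated entries. The main obstacle is thus essentially bookkeeping: tracking the sign produced by the chain rule in $\Phi'$ together with the antilinear conjugation, and justifying the surjectivity-of-trace argument that converts the vanishing of the boundary vector into the matrix identities.
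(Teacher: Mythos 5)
Your proof is correct and follows essentially the same route as the paper's: both reduce the symmetry $[H_0,\mathcal{PT}]=0$ to invariance of $\Dom(H_0)$ (since $H^2$ and the differential expression are preserved by $\mathcal{P}$, $\mathcal{T}$, $\mathcal{K}$), transport the boundary conditions through the parity flip and the pointwise antilinear action, and extract the matrix identity $A^-=\mathcal{T}A^+\mathcal{T}=\ii\sigma_2\overline{A^+}\ii\sigma_2$ (respectively $A^-=-\overline{A^+}$). The only cosmetic difference is that the paper factors out $-\mathcal{T}$ and compares boundary conditions directly, whereas you substitute $\Psi'(\mp a)=-A^\mp\Psi(\mp a)$ and invoke surjectivity of the boundary trace on $\Dom(H_0)$ --- a standard fact that the paper uses implicitly as well.
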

\begin{proof}
Since the space $H^2((-a,a);\mathbb{C}^2)$ is left invariant
under the actions of~$\P$, $\T$ and~$\mathcal{K}$,
it is enough to impose algebraic conditions on~$A^\pm$
so that the symmetry properties are ensured. 
More specifically, we need to ensure that 
$\Psi \in \Dom(H_0)$ implies $\mathcal{PT}\Psi \in \Dom(H_0)$.
Employing the identity 
\begin{equation*}
\begin{aligned}
(\mathcal{PT}\Psi)'(\pm a) + A^\pm(\mathcal{PT}\Psi)(\pm a) 
&= (-\mathcal{T}\Psi)'(\mp a) + A^\pm(\mathcal{T}\Psi)(\mp a) \\
&= -\mathcal{T}\left[\Psi'(\mp a) 
+ \mathcal{T} A^\pm \mathcal{T}\Psi(\mp a)\right]
\end{aligned}
\end{equation*}
and the bijectivity of~$\T$, the $\PT$-symmetry condition follows.
The $\mathcal{PK}$-symmetry condition can be established 
in the same manner.
\end{proof}

Another property we would like to examine in this section
is related to the notion of \emph{$S$-self-adjointness}.
We say that a densely defined operator~$H$ on a Hilbert space
is $S$-self-adjoint if
\begin{equation}\label{Jsa}
  H^*=S^{-1} H S
\end{equation}
for some bounded and boundedly invertible
(possibly antilinear) operator~$S$,
where~$H^*$ denotes the adjoint of~$H$.
It clearly generalizes the notion of self-adjointness
and pseudo-Hermiticity.
\begin{Proposition}\label{Prop.sa}
$H_0$ is
\begin{itemize}
\item
self-adjoint if, and only if,
$(A^\pm)^* = A^\pm\,;$
\item
$\P$-self-adjoint if, and only if, 
$A^- = -(A^+)^*$, \ie,
$$
  A^- =
  \begin{pmatrix}
    -\overline{a_{11}} & -\overline{a_{21}}\\
    -\overline{a_{12}} & -\overline{a_{22}}
  \end{pmatrix}
  \qquad \mbox{for} \qquad
  A^+ =
  \begin{pmatrix}
    a_{11} & a_{12} \\
    a_{21} & a_{22}
  \end{pmatrix}
  \,;
$$
\item
$\T$-self-adjoint if, and only if,
$(A^\pm)^* = -\mathcal{T}A^\pm	\mathcal{T}$, \ie,
$$
  A^\pm =
  \begin{pmatrix}
    a^\pm &  0 \\
    0 & a^\pm
  \end{pmatrix}
  \qquad\mbox{with}
  \qquad
  a^\pm \in \mathbb{C}
  \,;
$$
\item
$\mathcal{K}$-self-adjoint if, and only if, 
$(A^\pm)^* = \mathcal{K}A^\pm\mathcal{K} \equiv \overline{A^\pm}\,;$
\end{itemize}
\end{Proposition}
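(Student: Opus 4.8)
The plan is to reduce every assertion to an equality of boundary matrices. Proposition~\ref{msectorial} already tells us that $H_0^*$ acts by the \emph{same} differential expression as $H_0$, only with the matrices $(A^\pm)^*$ in the boundary conditions. Since $b=0$, this expression is $-\dd^2/\dd x^2$ acting as a scalar in the spin indices with constant real coefficients; hence each of the norm-preserving bijections $\P$, $\T$, $\mathcal{K}$ commutes with it (spatial reflection preserves the second derivative, and complex conjugation is harmless for real constant coefficients). Consequently $S^{-1}H_0 S=H_0$ as differential expressions for every $S\in\{\P,\T,\mathcal{K}\}$, so the defining identity~\eqref{Jsa} holds if, and only if, $\Dom(H_0^*)=\Dom(S^{-1}H_0 S)$, that is, if and only if the two sets of boundary matrices coincide. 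The self-adjoint case ($S=\mathrm{id}$) is then immediate: the domains agree exactly when $A^\pm=(A^\pm)^*$.

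For the $\P$-self-adjoint case I would translate the membership $\Psi\in\Dom(\P H_0\P)$, equivalent to $\P\Psi\in\Dom(H_0)$, into a boundary condition on $\Psi$. Using $(\P\Psi)(\pm a)=\Psi(\mp a)$ and $(\P\Psi)'(\pm a)=-\Psi'(\mp a)$, the condition becomes $\Psi'(\mp a)-A^\pm\Psi(\mp a)=0$, so that $\Dom(\P H_0\P)$ is governed by the matrices $-A^\mp$ at $\pm a$. Matching these with $(A^\pm)^*$ gives $A^-=-(A^+)^*$, the second endpoint producing only the adjoint of this relation. Writing out $-(A^+)^*=-\overline{(A^+)^T}$ entrywise yields the displayed matrix.

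The antilinear operators $\T$ and $\mathcal{K}$ demand the most care and constitute the main obstacle, as one must simultaneously track the complex conjugation, the identity $\T^{-1}=-\T$ (coming from $\T^2=-1$), and the effect of conjugating a matrix by $i\sigma_2$. Writing $\T\Psi=i\sigma_2\overline{\Psi}$, the requirement $\T\Psi\in\Dom(H_0)$ reads $i\sigma_2\overline{\Psi'(\pm a)}+A^\pm i\sigma_2\overline{\Psi(\pm a)}=0$; multiplying on the left by $(i\sigma_2)^{-1}$ and taking complex conjugates, with $\overline{\sigma_2}=-\sigma_2$, shows that $\Dom(\T^{-1}H_0\T)$ is governed by $\sigma_2\overline{A^\pm}\sigma_2=-\T A^\pm\T$, whence the condition $(A^\pm)^*=-\T A^\pm\T$. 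The analogous, simpler computation for $\mathcal{K}\Psi=\overline{\Psi}$ gives the matrices $\overline{A^\pm}=\mathcal{K}A^\pm\mathcal{K}$ and hence $(A^\pm)^*=\mathcal{K}A^\pm\mathcal{K}$. Finally, to extract the explicit diagonal form in the $\T$-self-adjoint case I would solve $(A^\pm)^*=\sigma_2\overline{A^\pm}\sigma_2$ entrywise: the comparison forces both off-diagonal entries to vanish and the two diagonal entries to coincide, leaving $A^\pm=a^\pm I$ with $a^\pm\in\Com$.
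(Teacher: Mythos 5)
Your proposal is correct and takes essentially the same route as the paper: the paper's own proof of Proposition~\ref{Prop.sa} just invokes ``similar arguments as in the proof of Proposition~\ref{Prop.symmetry}'', namely reducing each $S$-self-adjointness relation to an algebraic matching of the boundary matrices of $S^{-1}H_0S$ against those of $H_0^*$ from Proposition~\ref{msectorial}, which is exactly what you carry out in detail. Your explicit computations (the sign in the $\P$ case, the use of $\T^{-1}=-\T$, the identification $-\T A^\pm\T=\sigma_2\overline{A^\pm}\sigma_2$, and the entrywise solution forcing $A^\pm=a^\pm I$) are all accurate.
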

\begin{proof}
The claims follows by using similar arguments 
as in the proof of Proposition~\ref{Prop.symmetry}.
\end{proof}

The spectral analysis of non-self-adjoint operators
is more difficult than in the self-adjoint case,
partly because the \emph{residual spectrum}
is in general not empty for the former.
One of the goals of the present paper is to point out
that the existence of this part of spectrum
is always ruled out for $S$-self-adjoint operators
with antilinear~$S$.
\begin{Proposition}[General fact]
Let~$H$ be a densely defined closed linear operator
on a Hilbert space satisfying~\eqref{Jsa}
with a bounded and boundedly invertible
antilinear operator~$S$.
Then the residual spectrum of~$H$ is empty.
\end{Proposition}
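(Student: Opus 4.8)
The plan is to reduce the statement to the standard characterization of the residual spectrum and then to exploit the antilinearity of~$S$. Recall that, for a densely defined closed operator~$H$, a spectral point~$\lambda$ lies in the \emph{residual spectrum} precisely when $H-\lambda$ is injective, \ie~$\lambda$ is not an eigenvalue of~$H$, while the range $\operatorname{Ran}(H-\lambda)$ fails to be dense. Since $\overline{\operatorname{Ran}(H-\lambda)} = \ker\big((H-\lambda)^*\big)^\perp = \ker(H^*-\overline{\lambda})^\perp$, the range is non-dense if, and only if, $\overline{\lambda}$ is an eigenvalue of~$H^*$. The first step is therefore to record the implication: if $\lambda$ belongs to the residual spectrum of~$H$, then simultaneously $\lambda$ is \emph{not} an eigenvalue of~$H$ and $\overline{\lambda}$ \emph{is} an eigenvalue of~$H^*$. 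The strategy is then to show that the second condition alone already forces $\lambda$ to be an eigenvalue of~$H$, contradicting the first.

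The heart of the argument is a short computation using the hypothesis~\eqref{Jsa}. Assume $\overline{\lambda}$ is an eigenvalue of~$H^*$, so that there exists $\psi\neq 0$ in $\Dom(H^*)$ with $H^*\psi=\overline{\lambda}\,\psi$. The identity $H^*=S^{-1}HS$ is to be read as the assertion that $\psi\in\Dom(H^*)$ if, and only if, $S\psi\in\Dom(H)$, an equivalence guaranteed by the boundedness and bounded invertibility of~$S$; applying~$S$ to $S^{-1}HS\psi=\overline{\lambda}\,\psi$ then yields
\begin{equation*}
  H(S\psi) = S\big(\overline{\lambda}\,\psi\big) = \lambda\, S\psi
  \,,
\end{equation*}
where the last equality is the crucial point: the \emph{antilinearity} of~$S$ turns the scalar~$\overline{\lambda}$ back into~$\lambda$. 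As~$S$ is boundedly invertible, $S\psi\neq0$, so $S\psi$ is an eigenvector of~$H$ with eigenvalue~$\lambda$, \ie~$\lambda$ is an eigenvalue of~$H$.

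Combining the two steps gives the desired contradiction: no $\lambda$ can satisfy both requirements for membership in the residual spectrum, so that spectrum is empty. The main subtlety to be careful about is not any analytic estimate but the bookkeeping of complex conjugation: had~$S$ been \emph{linear}, the same manipulation would have produced $\overline{\lambda}$ as an eigenvalue of~$H$, which does not contradict $\lambda\notin\sigma_{\mathrm{p}}(H)$ and would leave the conclusion open. It is precisely the conjugation intrinsic to antilinear~$S$ that returns the \emph{original}~$\lambda$ and closes the argument. The only other point requiring attention is the domain matching encoded in the operator identity~\eqref{Jsa}, and this is dispatched directly by the bounded invertibility of~$S$.
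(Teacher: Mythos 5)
Your proof is correct and follows essentially the same route as the paper: establish that $\bar{\lambda}\in\sigma_{\mathrm{p}}(H^*)$ forces $\lambda\in\sigma_{\mathrm{p}}(H)$ via the antilinear similarity~$S$, then invoke the standard characterization of the residual spectrum to conclude it must be empty. You merely spell out details the paper leaves implicit (the orthogonal-complement justification of the characterization and the explicit eigenvector computation with~$S\psi$), including the key observation that antilinearity returns~$\lambda$ rather than~$\bar{\lambda}$.
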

\begin{proof}
Since~$H$ is $S$-self-adjoint, it is easy to see that
$\lambda$~is an eigenvalue of~$H$ (with eigenfunction~$\Psi$)
if, and only if,
$\bar{\lambda}$~is an eigenvalue of~$H^*$
(with eigenfunction~$S^{-1}\Psi$).
It is then clear from the general identity
\begin{equation*}
  \sigma_\mathrm{r}(H) =
  \left\{
  \lambda\in\Com\ | \
  \bar{\lambda}\in\sigma_\mathrm{p}(H^*)
  \ \& \
  \lambda\not\in\sigma_\mathrm{p}(H)
  \right\}
\end{equation*}
that the residual spectrum of~$H$ must be empty.
\end{proof}

The proposition generalizes the fact pointed out in~\cite{BK}
for $S$-self-adjoint operators with~$S$ being a conjugation operator
(\eg~$\mathcal{K}$) and applies to our (different) choice of~$\T$.

\section{Spectral analysis}\label{Sec.spectrum}
%

\subsection{Location of the spectrum and pseudospectrum}
As a consequence of Proposition~\ref{msectorial},
we know that the numerical range of~$H_b$ is contained
in a sector of the complex plane. 
Since the spectrum is a subset of 
the closure of the numerical range,
it provides a basic information on
the location of the spectrum of~$H_b$.
However, coming back to the inequality~\eqref{fundamental}
on which the proof of Lemma~\ref{Lem.relative} is based,
we are able to establish a better result in our case.
\begin{Proposition} \label{parabola.spectrum}
The spectrum of~$H_b$ is enclosed in a parabola, 
\begin{equation*}
\begin{aligned}
\sigma(H_b) \subset \Xi_b := 
\bigg\{ 
z \in \mathbb{C} 
\ \Big| \quad
\Re z &\geq - \left(|b|+ 4\,|A|^2 + \frac{|A|}{2a}\right) =: C
\,,
\\
|\Im z| &\leq \sqrt{8}\,|A|\,\sqrt{\Re z + C} 
+ \frac{|A|}{2a} 
\ \bigg\},
\end{aligned}
\end{equation*}
where $|A| := |A^+| + |A^-|$.
\end{Proposition}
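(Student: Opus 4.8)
The plan is to show that any $z$ in the spectrum must lie in $\Xi_b$, by using the fact that the spectrum is contained in the closure of the numerical range together with a sharpened bound on the numerical range coming directly from the estimate~\eqref{fundamental}. First I would take an arbitrary $\Psi\in\Dom(H_b)$ and compute $(\Psi,H_b\Psi)=h_b[\Psi]$ using the form decomposition~\eqref{form}--\eqref{form3}. The real part is $\Re h_b[\Psi]=\|\Psi'\|^2+b\,(\Psi,\sigma_3\Psi)+\Re q_3[\Psi]$ and the imaginary part is $\Im h_b[\Psi]=\Im q_3[\Psi]$, since $q_1$ is nonnegative and real and $q_2$ is real (as $\sigma_3$ is Hermitian).

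The key quantitative input is the bound on the boundary term. From the definition of $q_3$ and the matrix norm, one gets $|q_3[\Psi]|\le |A^+|\,|\Psi(a)|^2+|A^-|\,|\Psi(-a)|^2$, and then I would feed in the \emph{first} inequality of~\eqref{fundamental}, namely $|\Psi(\pm a)|^2\le 2\|\Psi'\|\|\Psi\|+\tfrac{1}{2a}\|\Psi\|^2$, rather than the $\varepsilon$-version. This yields $|q_3[\Psi]|\le |A|\big(2\|\Psi'\|\|\Psi\|+\tfrac{1}{2a}\|\Psi\|^2\big)$ with $|A|=|A^+|+|A^-|$. Normalizing $\|\Psi\|=1$ and writing $t:=\|\Psi'\|$, the real part obeys $\Re h_b[\Psi]\ge t^2-|b|-2|A|t-\tfrac{|A|}{2a}$. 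Minimizing the right-hand side over $t\ge 0$ (the minimum of $t^2-2|A|t$ is $-|A|^2$ at $t=|A|$) gives $\Re h_b[\Psi]\ge -|b|-|A|^2-\tfrac{|A|}{2a}$, which already sits comfortably to the right of the constant $C=-(|b|+4|A|^2+\tfrac{|A|}{2a})$; the extra slack is what lets me absorb the imaginary part below.

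For the parabolic constraint I would bound $|\Im z|=|\Im h_b[\Psi]|\le |q_3[\Psi]|\le 2|A|\,t+\tfrac{|A|}{2a}$, again with $\|\Psi\|=1$ and $t=\|\Psi'\|$. The point is to re-express $t$ in terms of $\Re z$. Since $\Re z\ge t^2-|b|-2|A|t-\tfrac{|A|}{2a}$, I would complete the square to obtain $t^2\le \Re z+|b|+\tfrac{|A|}{2a}+2|A|t$, and then bound $2|A|t\le t^2+|A|^2$ to get a clean inequality of the form $t^2\le 2(\Re z+C')$ for an appropriate constant, ultimately producing $2|A|t\le\sqrt{8}\,|A|\sqrt{\Re z+C}$. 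Combining with the bound on $|\Im z|$ gives the stated inequality $|\Im z|\le\sqrt{8}\,|A|\sqrt{\Re z+C}+\tfrac{|A|}{2a}$. I expect the main obstacle to be purely bookkeeping: tracking the constants so that the two quadratic manipulations (the minimization for the real part and the substitution for the imaginary part) fit together with exactly the factor $4|A|^2$ in $C$ and $\sqrt{8}$ in the parabola. The final step is to pass from the numerical range to the spectrum, invoking that $\sigma(H_b)$ lies in the closure of the numerical range of the m-sectorial operator $H_b$ (Proposition~\ref{msectorial}), and noting that $\Xi_b$ is closed so the closure introduces no extra points.
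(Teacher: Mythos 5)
Your strategy coincides with the paper's own proof: bound the numerical range of the form $h_b$ using the first inequality of~\eqref{fundamental}, then pass from the numerical range to the spectrum of the m-sectorial operator $H_b$. Your real-part estimate $\Re h_b[\Psi]\ge t^2-|b|-2|A|t-\tfrac{|A|}{2a}$ (with $\|\Psi\|=1$, $t=\|\Psi'\|$), the bound $|\Im h_b[\Psi]|\le |q_3[\Psi]|\le 2|A|t+\tfrac{|A|}{2a}$, and the minimization giving $\Re z\ge -\big(|b|+|A|^2+\tfrac{|A|}{2a}\big)$ are all correct (the last is even sharper than the stated left edge of $\Xi_b$).

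However, the single step that is supposed to produce the parabola fails as written. Starting from $t^2\le \Re z+|b|+\tfrac{|A|}{2a}+2|A|t$, you propose to insert the equal-weight Young inequality $2|A|t\le t^2+|A|^2$; doing so makes the $t^2$ on the two sides cancel, leaving only $0\le \Re z+|b|+|A|^2+\tfrac{|A|}{2a}$ --- true, but carrying no information about $t$, so no inequality of the form $t^2\le 2(\Re z+C')$ can follow from it. The equal-weight version absorbs \emph{all} of $t^2$ and is therefore useless here; what is needed is the weighted version
\begin{equation*}
2|A|\,t\le \tfrac{1}{2}\,t^2+2\,|A|^2\,,
\end{equation*}
which gives $\tfrac{1}{2}t^2\le \Re z+|b|+2|A|^2+\tfrac{|A|}{2a}$, hence
\begin{equation*}
t^2\le 2\left(\Re z+|b|+4|A|^2+\tfrac{|A|}{2a}\right)
\qquad\mbox{and}\qquad
|\Im z|\le 2|A|\,t+\tfrac{|A|}{2a}
\le \sqrt{8}\,|A|\sqrt{\Re z+|b|+4|A|^2+\tfrac{|A|}{2a}}+\tfrac{|A|}{2a}\,,
\end{equation*}
exactly the stated parabola; this also explains the factor $4|A|^2$ in the constant, which your ``extra slack'' remark correctly anticipated. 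This repaired step is precisely what the paper does, except that the paper performs the absorption once, directly in the real-part estimate, obtaining $\Re h_b[\Psi]\ge\tfrac{1}{2}\|\Psi'\|^2-\big(|b|+4|A|^2+\tfrac{|A|}{2a}\big)\|\Psi\|^2$, which then serves simultaneously for the left edge and, rearranged, for the parabola. With that one-line replacement your argument is complete; the remaining steps (identifying $(\Psi,H_b\Psi)=h_b[\Psi]$ on $\Dom(H_b)$, the spectrum lying in the closure of the numerical range, and closedness of $\Xi_b$) are sound.
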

\begin{proof}
By \cite[Corol.~VI.2.3]{Kato},
the numerical range of~$H_b$ is a dense subset 
of the numerical range of its form~$h_b$,
the latter being defined as the set of all complex numbers $h_b[\Psi]$ 
where~$\Psi$ changes over all $\Psi \in \Dom(h_b)$ 
such that $\|\Psi\|=1$. 
Using the first inequality of~\eqref{fundamental}, we get
\begin{equation*}
\begin{aligned}
\Re h_b[\Psi] &\geq q_1[\Psi] + b \, q_2[\Psi] - |q_3[\Psi]| 
\\
&\geq \|\Psi'\|^2 - |b| \, \|\Psi\|^2 
- 2 \, |A| \, \|\Psi'\|\,\|\Psi\|
- \frac{|A|}{2a} \, \|\Psi\|^2 
\\
&\geq \frac{1}{2} \, \|\Psi'\|^2
- \left( |b| + 4 \, |A|^2 + \frac{|A|}{2a}\right) 
\|\Psi\|^2
\,,
\\
|\Im h_b[\Psi]| &\leq |q_3[\Psi]| 
\leq 2 \, |A| \, \|\Psi'\|\,\|\Psi\|
+ \frac{|A|}{2a} \, \|\Psi\|^2 
\,,
\end{aligned}
\end{equation*}
for every $\Psi \in \Dom(h_b)$.
The claim follows by combining these two estimates.
\end{proof}
Thus the resolvent set of~$H_b$ 
contains the complement of $\Xi_b$ in $\mathbb{C}$. 
As a further consequence, we can establish an upper bound 
on the norm of the resolvent:
$$
\|(H_b - z)^{-1}\| \leq 1/{\rm dist}\left(z, \partial \Xi_b\right) 
\qquad \text{for all} \quad z \in \mathbb{C}\setminus\Xi_b.
$$
This result can be also interpreted as 
a location of the \emph{pseudospectrum} of~$H_b$,
\cf~\cite[Sec.~9.3]{Davies_2007}.

\begin{Remark}
Note that the set~$\Xi_b$ in Proposition~\ref{parabola.spectrum}
is not symmetric with respect to the real axis.
On the other hand, if~$H_b$ is $\mathcal{C}$-symmetric
with antiunitary~$\mathcal{C}$ 
(\eg, if~$H_b$ is $\mathcal{PK}$-symmetric),
then we \emph{a priori} know that the numerical range
must be symmetric with respect to the real axis
and an improved version of Proposition~\ref{parabola.spectrum} holds.
\end{Remark}

\subsection{The nature of the spectrum}
Since the Neumann Laplacian~$H_0^N$ has compact resolvent
and the relative bound in Lemma~\ref{Lem.relative}
can be chosen less than~$1/2$ (in fact, arbitrarily small),
it follows from \cite[Thm.~VI.3.4]{Kato}
that~$H_b$ has compact resolvent as well
(for any choice of~$A^\pm$).
\begin{Proposition}
$H_b$ has a purely discrete spectrum
(\ie~any point in the spectrum is an isolated eigenvalue
of finite algebraic multiplicity).
\end{Proposition}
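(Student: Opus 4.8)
The plan is to deduce the result directly from the compactness of the resolvent of~$H_b$, which has in fact already been secured in the paragraph immediately preceding the statement. Recall that the Neumann Laplacian~$H_0^N$ has compact resolvent, and that by Lemma~\ref{Lem.relative} the perturbation $q_2+q_3$ is relatively bounded with respect to~$q_1$ with arbitrarily small relative bound; the perturbation result \cite[Thm.~VI.3.4]{Kato} then guarantees that~$H_b$ has compact resolvent for every choice of~$A^\pm$. The proposition therefore becomes a purely formal corollary of this fact combined with a standard spectral-theory statement.

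Granting the compactness of the resolvent, I would first record that the resolvent set $\rho(H_b)$ is nonempty: by Proposition~\ref{parabola.spectrum} the spectrum of~$H_b$ is confined to the parabolic region~$\Xi_b$, so every $z\in\Com\setminus\Xi_b$ belongs to $\rho(H_b)$. With $\rho(H_b)\neq\emptyset$ in hand, the compactness of $(H_b-z_0)^{-1}$ for one (hence, by the resolvent identity, for every) $z_0\in\rho(H_b)$ permits the application of \cite[Thm.~III.6.29]{Kato}: the spectrum of a closed operator whose resolvent is compact consists entirely of isolated eigenvalues of finite algebraic multiplicity, with no finite accumulation point. This is precisely the assertion that $\sigma(H_b)$ is purely discrete.

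The only genuinely analytic ingredient — the compactness of the resolvent — is thus supplied in advance by the interplay between the compact resolvent of~$H_0^N$ and the small relative bound of Lemma~\ref{Lem.relative}, so no substantial obstacle remains. The proof reduces to noting the nonemptiness of~$\rho(H_b)$, which follows from the enclosure in~$\Xi_b$, and then citing the appropriate off-the-shelf theorem; accordingly I expect the whole argument to occupy only a couple of sentences once these references are assembled.
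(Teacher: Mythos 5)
Your proposal is correct and follows essentially the same route as the paper, whose entire proof is the paragraph preceding the statement: compact resolvent of the Neumann Laplacian $H_0^N$ plus the arbitrarily small relative bound of Lemma~\ref{Lem.relative} yield, via \cite[Thm.~VI.3.4]{Kato}, that $H_b$ has compact resolvent, from which discreteness of the spectrum follows. Your additional remarks (nonemptiness of $\rho(H_b)$, guaranteed already by m-sectoriality in Proposition~\ref{msectorial}, and the explicit citation of Kato's theorem on operators with compact resolvent) merely make explicit what the paper leaves implicit.
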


Solving the eigenvalue problem $H_b\Psi=\lambda\Psi$ consists in
constructing the fundamental system of $-\psi_\pm''=k_\pm^2\psi_\pm$
(say, in terms of sines and cosines),
with $k_\pm:=\sqrt{\lambda \mp b}$,
and subject it to the boundary conditions~\eqref{bc.pm}.
This leads to the following algebraic equation
for the eigenvalues~$\lambda$:
\begin{align}\label{eigenvalues}
& \left[\det(A^+) + \det(A^-) - a_{11}^+ a_{22}^- 
- a_{22}^+ a_{11}^-\right] k_- k_+ \cos(a k_-) \cos(a k_+) 
\nonumber\\&
+ \left[\det(A^+) \det(A^-) +a_{11}^+ a_{11}^- k_-^2 
+a_{22}^+ a_{22}^- k_+^2 +k_-^2 k_+^2\right]\sin(a k_-) \sin(a k_+)
\nonumber\\&
+ \left[-\det(A^+) a_{22}^- + a_{22}^+ \det(A^-) 
+(-a_{11}^+ + a_{11}^-) k_-^2\right] k_+ \sin(a k_-) \cos(a k_+)
\nonumber\\&
+ \left[-\det(A^+) a_{11}^- + a_{11}^+ \det(A^-) 
+ (- a_{22}^+ +a_{22}^-) k_+^2\right] k_- \cos(a k_-) \sin(a k_+)
\nonumber\\&
+\left(a_{21}^+ a_{12}^- + a_{12}^+ a_{21}^-\right) k_- k_+ = 0,
\end{align}
where $a_{ij}^+$ and $a_{ij}^-$ denote the elements 
of the matrices $A^+$ and $A^-$, respectively.

There are only a few choices of~$A^\pm$ 
for which~\eqref{eigenvalues} admits explicit solutions. 
In the sequel we consider some particular situations
that we analyse with help of numerical solutions.

\subsection{Examples}
\paragraph{A self-adjoint example with avoided crossings.}
Let us choose
\begin{equation}\label{Ex.a}
A^\pm :=
\begin{pmatrix}
0 & \ii \alpha\\
-\ii \alpha & 0
\end{pmatrix}
  \,,
\end{equation}
where~$\alpha$ is a real parameter.
It follows from from Proposition~\ref{Prop.sa}
that all the eigenvalues are real since~$H_b$ is self-adjoint.
The implicit equation for the eigenvalues takes form
\begin{equation*}
  2 \alpha^2 k_+ k_- [1 - \cos (2a k_+)\cos (2a k_-)]
  = - (k_+^2 k_-^2 + \alpha^4)\sin(2a k_+)\sin(2a k_-).
\end{equation*}
The dependence of eigenvalues on the parameter~$\alpha$
can be seen in Figure~\ref{exampleB}.
An interesting phenomenon in this figure is the approaching
of a pair of eigenvalues and its subsequent moving back
and slowly approaching to constant values.
It should be noted that in the point of closest approach
the two curves do not intersect.
This avoided crossing holds for each pair of the eigenvalues.

\begin{figure}[h!]
\begin{center}
\begin{tabular}{cc}
\includegraphics[width=0.45\textwidth]{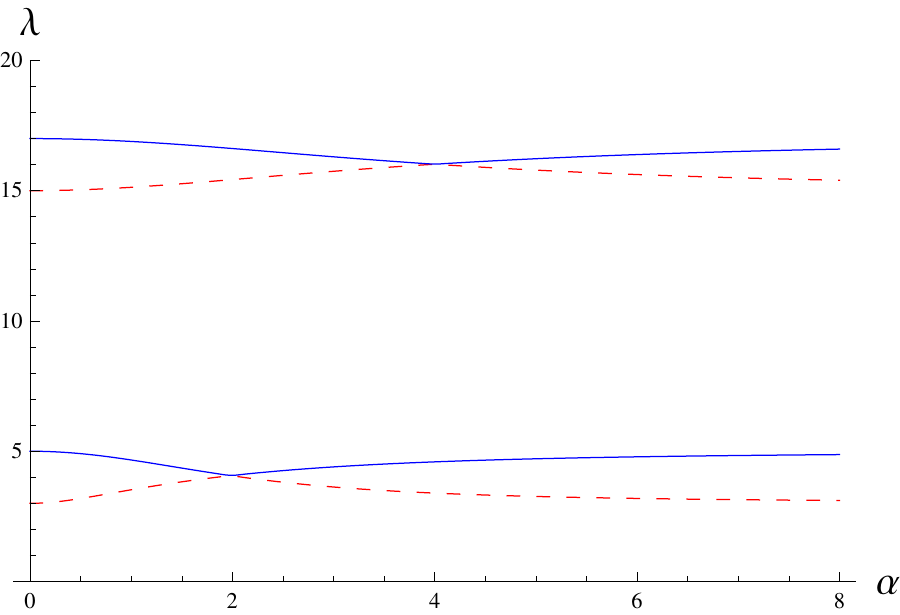}
& \includegraphics[width=0.45\textwidth]{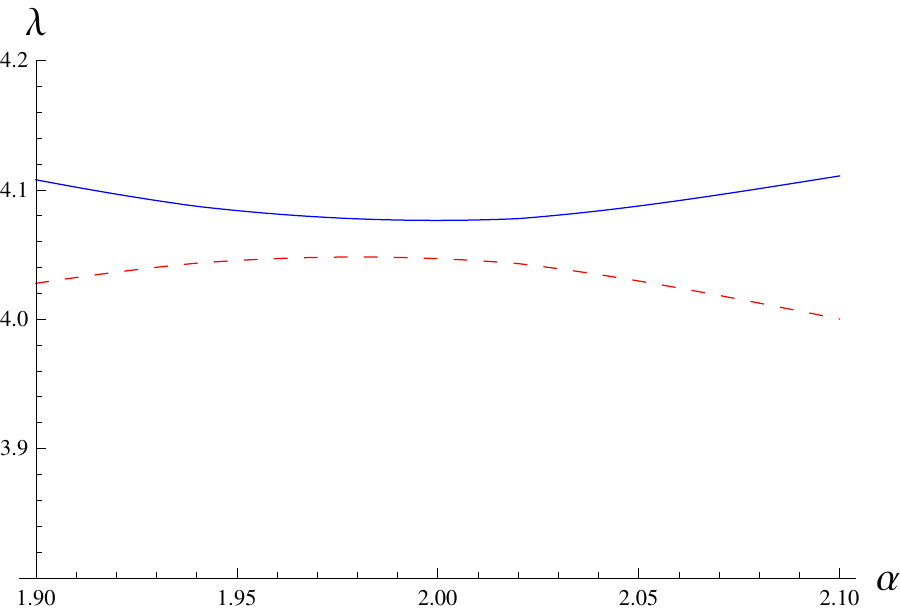}
\end{tabular}
\end{center}
%
\caption{$\alpha$-dependence of eigenvalues
for $b = 1$ and $a = \frac{\pi}{4}$ in example~\eqref{Ex.a},
with a zoom of the avoided crossing
of the first pair of eigenvalues on the right.}
\label{exampleB}
\end{figure}

\paragraph{A $\mathcal{PT}$-symmetric example
with real and complex spectra.}
As an example of non-Hermitian but $\mathcal{PT}$-symmetric
boundary conditions, let us consider
\begin{equation}\label{Ex.b}
A^\pm =
\begin{pmatrix}
\ii \alpha \pm \beta & 0\\
0 & \ii \alpha \pm \beta
\end{pmatrix}
  \,,
\end{equation}
where~$\alpha$ and~$\beta$ are real parameters.
The feature of this example is that the spinorial
components do not mix.	
The implicit equation for the eigenvalues acquires the form
\begin{multline}\label{eigenC}
 \left(-2 \beta k_- \cos(2 a k_-)
 + (k_-^2 - \alpha^2 - \beta^2) \sin(2 a k_-)\right)
 \\
 \times \left(-2 \beta k_+ \cos(2 a k_+)
 + (k_+^2 - \alpha^2 - \beta^2) \sin(2 a k_+)\right) = 0.
\end{multline}

Because of the decoupling, this eigenvalue problem can be analysed
by using known results for this type of boundary conditions
in the scalar case previously studied in~\cite{KBZ}
and in more detail in~\cite{KS}.
It turns out that the spectrum significantly
depends on the sign of~$\beta$.

\medskip
\noindent
\underline{$\beta = 0$}.
It follows from \cite{KBZ} that one pair of eigenvalues
depend on the parameter~$\alpha$ quadratically
and the others are constant, 
see the left part of Figure~\ref{exampleCb}.
More specifically, the eigenvalues explicitly read
\begin{equation}\label{eig}
\lambda_{j,\pm}=
\begin{cases}
  \alpha^2 \mp b
  & \mbox{if} \quad j=0
  \,,
  \\
  \left(\frac{\displaystyle j \pi}{\displaystyle 2a}\right)^2 \mp b
  & \mbox{if} \quad j \geq 1
  \,.
\end{cases}
\end{equation}
The crossings of full (respectively dashed)
lines in the left part of Figure~\ref{exampleCb}
correspond to eigenvalues of geometric multiplicity one
and algebraic multiplicity two,
while the crossings of full lines with dashed lines
correspond to eigenvalues of both multiplicities equal to two.
The entire spectrum is doubly degenerate for~$b=0$
and there exist eigenvalues of geometric multiplicity two
and algebraic multiplicity four.

%
%
%
%

\medskip
\noindent
\underline{$\beta > 0$}.
In this case, the reality of the spectrum was proved in~\cite{KS}.
The right part of Figure~\ref{exampleCb} 
shows the dependence of the eigenvalues
on the parameter~$\alpha$.
We again observe pairs of eigenvalues split 
because of the presence of the magnetic field.

\begin{figure}
\begin{center}
\begin{tabular}{cc}
\includegraphics[width=0.45\textwidth]{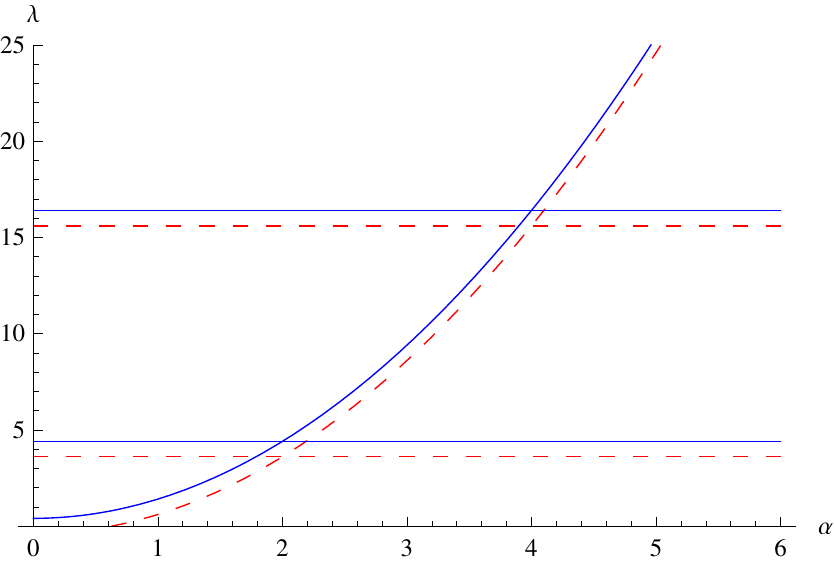}
& \includegraphics[width=0.45\textwidth]{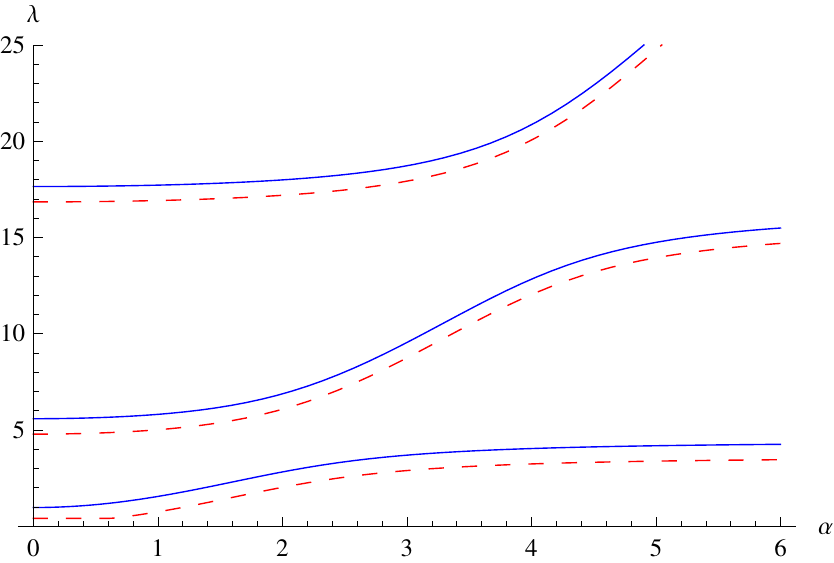}
\\
$\beta=0$
& $\beta=0.5$
\end{tabular}
\end{center}
%
\caption[ExampleA]{$\alpha$-dependence of eigenvalues
for $b = 0.4$ and $a = \frac{\pi}{4}$ in example~\eqref{Ex.b}.}
\label{exampleCb}
\end{figure}

\medskip
\noindent
\underline{$\beta < 0$}.
On the other hand, the reality of the spectrum
in the case when~$\beta$ is negative
is not guaranteed and, indeed, it is easily seen
from Figure~\ref{exampleCc} that complex conjugate pairs
of eigenvalues do appear when a couple of
real eigenvalues collides as enlarging~$\alpha$.
The pair of complex eigenvalues becomes real again
for larger values of~$\alpha$.
It follows from the analysis in~\cite{KS} that
only one pair of complex conjugate eigenvalues
occurs simultaneously in the spectrum.

\begin{figure}[h!t]
\begin{center}
\begin{tabular}{cc}
\includegraphics[width=0.45\textwidth]{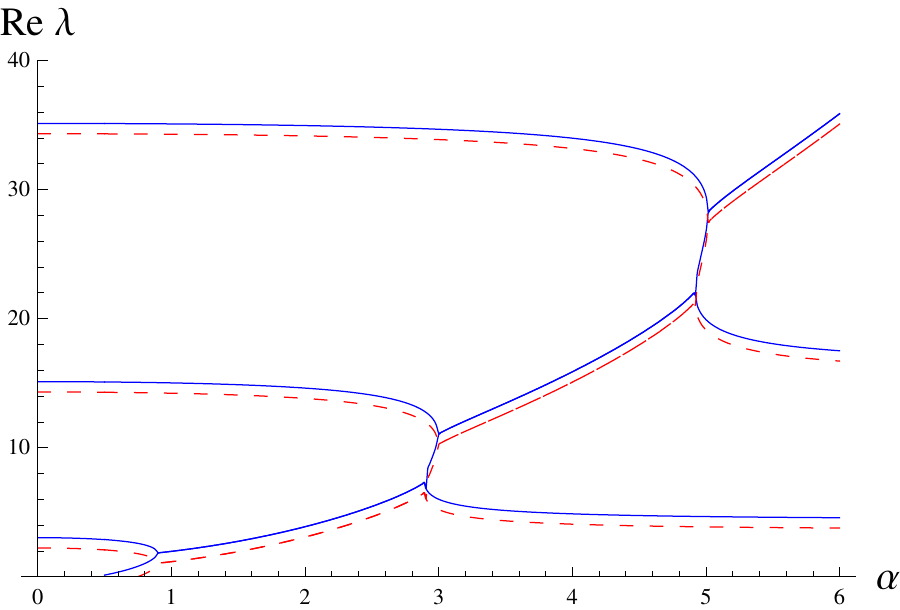}
&\includegraphics[width=0.45\textwidth]{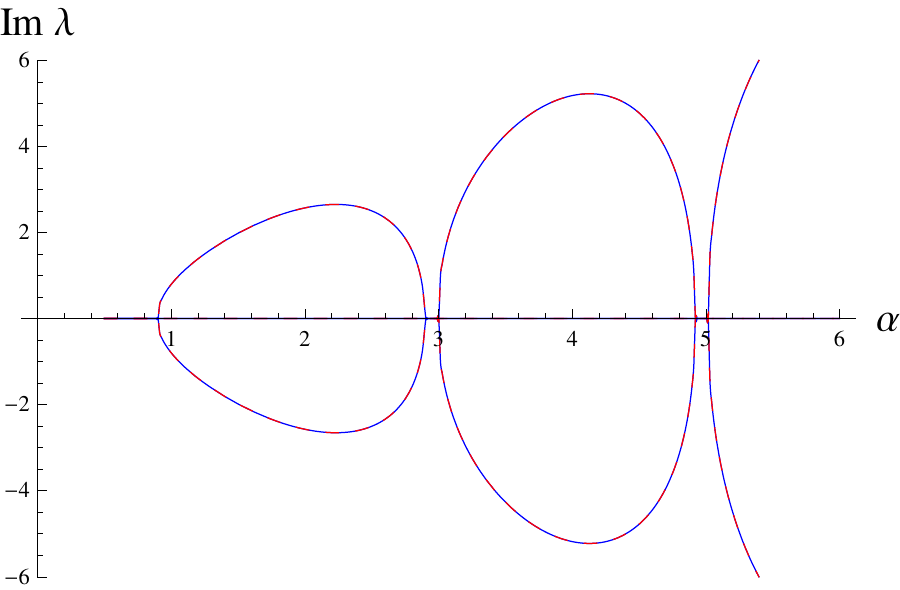}
\end{tabular}
\end{center}
%
\caption[ExampleC]{$\alpha$-dependence of 
the real (left) and imaginary (right) parts of eigenvalues
for $b = 0.4$, $a = \frac{\pi}{4}$ and $\beta=-0.5$
in example~\eqref{Ex.b}.}
\label{exampleCc}
\end{figure}

\paragraph{A $\PT$-symmetric example with coupled spinorial components}
As another example of non-Hermitian 
$\PT$-symmetric boundary conditions,
let us select
\begin{equation}\label{PTexample}
A^{\pm} =
\begin{pmatrix}
0 & \pm \ii \alpha\\
\pm \ii \alpha & 0
\end{pmatrix},
\end{equation}
where~$\alpha$ is a real parameter.
The characteristic feature of this model is a non-trivial 
mixing of spinorial components. 
The implicit equation for the eigenvalues now takes the form
\begin{equation}
\begin{aligned}
 4 \alpha^2 k_+ k_- \cos(a k_+)^2 \cos(a k_-)^2&+4 \alpha^2 k_+ k_- \sin(a k_+)^2 \sin(a k_-)^2 \\
&= -(k_+ k_- +\alpha^4) \sin(2 a k_+) \sin(2 a k_-).
\end{aligned}
\end{equation}

The dependence of low-lying eigenvalues 
on the parameter~$\alpha$ can be seen in Figure~\ref{exampleD}.
Here the lowest pair of real eigenvalues exhibits a crossing,
however, the eigenvalues remain real after the crossing point
as the parameter~$\alpha$ increases. 
This behaviour is not featured uniquely
by the lowest pair of eigenvalues, 
it also appears for higher-lying eigenvalues in the spectrum
(not visible in the figure).
On the other hand, as~$\alpha$ increases, 
the other pairs of eigenvalues in the figure
complexify after the first collision,
then the corresponding eigenvalues 
propagate as complex conjugate pairs in the complex plane,
meet again and become real.

\begin{figure}[h!]
\begin{center}
\begin{tabular}{cc}
\includegraphics[width=0.45\textwidth]{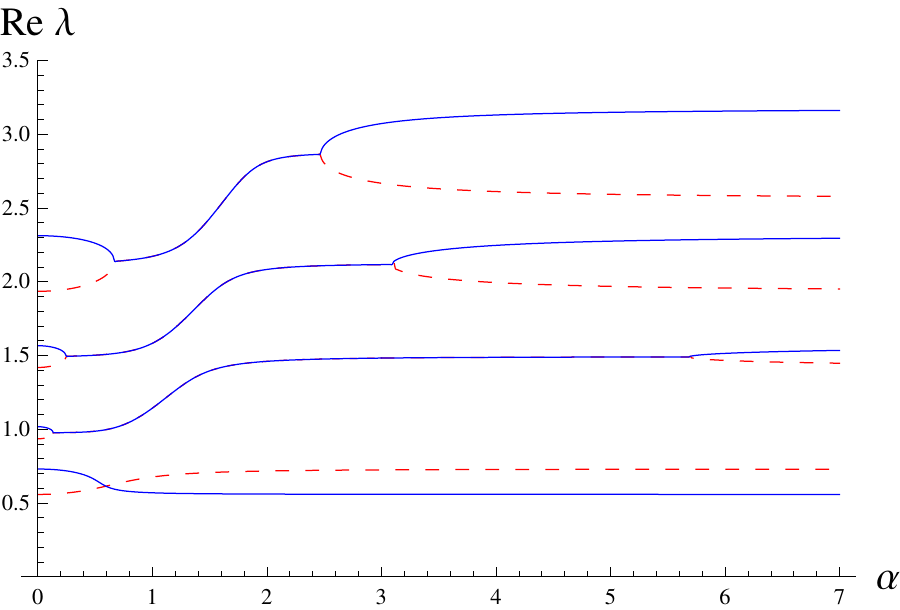}
&\includegraphics[width=0.45\textwidth]{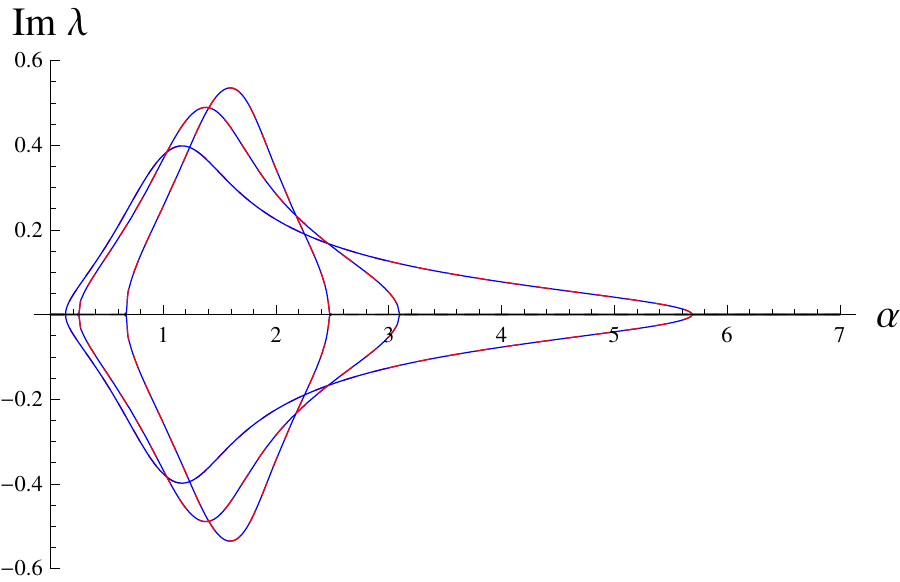}
\end{tabular}
\end{center}
\caption{$\alpha$-dependence of eigenvalues 
for $b = 0.5$, $a = \sqrt{43}$ in example~\eqref{PTexample}.
An animation can be found on the website~\cite{KKNS-video}.}
\label{exampleD}
\end{figure}
%

\section{Conclusions}\label{Sec.end}
%
The goal of this paper was to investigate the role of spin
in complex extensions of quantum mechanics
on a simple model of Pauli equation
with complex Robin-type boundary conditions.
A special attention was paid to $\PT$-symmetric situations
with a physical choice of the time-reversal operator~$\mathcal{T}$.

A simple physical interpretation of our model in terms 
of scattering was suggested in Section~\ref{Sec.scattering}.  
It would be desirable to examine this motivation in more details
and include ``spin-dependent electric potential''
(\eg\ Bychkov-Rashba or Dreselhauss spin-orbit terms
typical for semiconductor physics~\cite{spintronics}). 

Robin boundary conditions represent a class
of separated boundary conditions.
Our model can be naturally extended to
connected boundary conditions,
whose spectral analysis represents a direction
of potential future research
(\cf~\cite{KS,Ergun_2010,Ergun-Saglam_2010} in the scalar case).

In this paper we did not discuss the important question
of the existence of similarity transformations
(or the ``metric'' in the $\PT$-symmetric context)
connecting our non-Hermitian operators with
self-adjoint Hamiltonians.
The problem generally constitutes a difficult task
and very few closed formulae are known
(\cf~\cite{K4,Assis-Fring_2009,Assis_2011,KSZ} and references therein).
However, we can easily extend the results established
in the scalar case without magnetic field~\cite{KSZ} 
to our spinorial example~\eqref{Ex.b}
and compute the metric in this special case. 
Let us define
\begin{equation*}
\Theta
:= \begin{pmatrix} I+K & 0 \\ 0 & I+K \end{pmatrix}
\,,
\end{equation*}
where~$I$ denotes the identity operator on $\sii((-a,a))$
and~$K$ is an integral operator with kernel
\begin{equation*}
K(x,y):=e^{\ii \alpha (x-y) - \beta |x-y| } \,
\big[c+\ii \alpha  \sgn(x-y) \big]
\,, 
\end{equation*}
with~$c$ being any real number.
It follows from~\cite[Sec.~4.5]{KSZ} 
and the nature of the decoupled boundary conditions~\eqref{Ex.b} 
that~$\Theta$ represents a one-parametric family of metrics for~$H_b$
under the $\PT$-symmetric choice~\eqref{Ex.b}.
More precisely, $H_b$~is $\Theta^{-1}$-self-adjoint
(\cf~\eqref{Jsa})
and~$\Theta$ is \emph{positive} provided that either:
$a$~is small;
or $\beta$ is positive and large;
or $|c|$ and $|\alpha|$ are small.
To find the self-adjoint counterpart of~$H_b$
determined by this similarity transformation
constitutes an open problem 
(in the scalar case~\cite{KSZ} 
there exists results for~$\beta=0$).	 

Our model was effectively one-dimensional.
Higher dimensional generalizations in the spirit
of \cite{BK,BK2,Olendski_2011} would be especially
interesting for variable boundary conditions
(\ie~non-constant matrix~$A$).

\subsection*{Acknowledgement}
The last three authors acknowledge the hospitality 
of the Comenius University in Bratislava
where this work was initiated;
these authors have been partially supported 
by the GACR grant No.\ P203/11/0701.
The first author acknowledges support from DFG SFB-689.
The last author has been supported by a grant within 
the scope of FCT's project PTDC/MAT/101007/2008 
and partially supported by FCT's projects 
PTDC/MAT/101007/2008 and PEst-OE/\ MAT/UI0208/2011.

%
{\small
\bibliography{bib}
\bibliographystyle{amsplain}
}
\end{document}